\DeclareFixedFont{\MyTitleFont}{OT1}{ptm}{m}{n}{22pt}
\DeclareFixedFont{\MyAuthorFont}{OT1}{ptm}{m}{n}{13pt}
\DeclareFixedFont{\MyAbstractTitleFont}{OT1}{ptm}{m}{it}{12pt}
\DeclareFixedFont{\MyAbstractFont}{OT1}{ptm}{m}{it}{11pt}
\DeclareFixedFont{\MySubtitleFont}{OT1}{ptm}{m}{n}{15pt}
\DeclareFixedFont{\MySubSubtitleFont}{OT1}{ptm}{m}{n}{13pt}
\DeclareFixedFont{\MySubSubSubtitleFont}{OT1}{ptm}{m}{n}{11pt}
\DeclareFixedFont{\MyTextFont}{OT1}{ptm}{m}{n}{11pt}
\title{\MyTitleFont Sensitivity Analysis of Continuous-Time Systems based on Power Spectral Density\vspace{0em}}
\author{\MyAuthorFont Neng Wan$^1$, Dapeng Li$^2$, and Naira Hovakimyan$^1$}
\date{}
\newtheorem{theorem}{Theorem}
\newtheorem{remark}{Remark}
\newtheorem{assumption}{Assumption}
\newtheorem{lemma}{Lemma}
\newtheorem{definition}{Definition}
\newtheorem{corollary}[theorem]{Corollary}
\begin{document}

%===============Show the title ==================
\maketitle

%Thanks & Footnote ================================================
\footnotetext[0]{$^{1}$Neng Wan and Naira Hovakimyan are with the Department of Mechanical Science and Engineering, University of Illinois at Urbana-Champaign, Urbana, IL 61801, USA. {\tt\small \{nengwan2, nhovakim\}@illinois.edu}.}
\footnotetext[0]{$^{2}$Dapeng Li is a Principal Scientist with the JD.com Silicon Valley Research Center, Santa Clara, CA 95054, USA. {\tt\small dapeng.li@jd.com}.} 
\vspace{-4em}

%===================== Section 0 Abstract ===================
\begin{abstract}
{Bode integrals of sensitivity and sensitivity-like functions along with complementary sensitivity and complementary sensitivity-like functions are conventionally used for describing performance limitations of a feedback control system. In this paper, we show that in the case when the disturbance is a wide sense stationary process the (complementary) sensitivity Bode integral and the (complementary) sensitivity-like Bode integral are identical. A lower bound of the continuous-time complementary sensitivity-like Bode integral is also derived and examined with the linearized flight-path angle tracking control problem of an F-16 aircraft.}
\end{abstract}
%===============================================
\vspace{-0.5em}

%================ Section 1 Introduction =================
\titleformat*{\section}{\centering\MySubtitleFont}
\titlespacing*{\section}{0em}{1.25em}{1.25em}[0em]

\section{Introduction}\label{sec1}

The last two decades have witnessed a tremendous progress in communication technologies and their use in feedback control systems. A great deal of attention has been given to understanding the fundamental limitations of closed-loop systems in the presence of communication channels \cite{Martin_TAC_2007, Martin_TAC_2008, Okano_Auto_2009, Ishii_SCL_2011, Fang_TAC_2017, Li_TAC_2013}. The main contribution of these papers was to derive performance limitations of stochastic nonlinear systems in the presence of limited information. While \cite{Martin_TAC_2007, Martin_TAC_2008, Okano_Auto_2009, Ishii_SCL_2011, Fang_TAC_2017} looked into discrete-time systems and investigated the Bode-like integrals using Kolmogorov's entropy-rate equality~\cite{Cover_2012}, the results in \cite{Li_TAC_2013} provided an extension to continuous-time systems by resorting to mutual information rates. In these papers, the notion of the sensitivity-like function was introduced to derive Bode-like integrals and corresponding lower bounds, which can be considered as a generalization of the classical result of Bode integrals for linear time-invariant (LTI) deterministic systems~\cite{Bode_1945}. The classical result in~\cite{Bode_1945} states that  for open-loop stable transfer functions the Bode integral equals zero, while for unstable open-loop transfer functions it is lower bounded by the sum of unstable poles of the open-loop transfer function~\cite{Freudenberg_1985, Freudenberg_1987}. Similar to the sensitivity function in a LTI system, the complementary sensitivity function is also used for robustness and performance analysis of closed-loop systems~\cite{Middleton_1991}. We notice that the result on the complementary sensitivity Bode integral was once hindered by the unboundedness of the integrand in high frequencies~\cite{Sung_IJC_1989}. This issue was later overcome in \cite{Middleton_1991} by adopting a weighted Bode integral of the complementary sensitivity function, proven to be lower bounded by the sum of the reciprocals of non-minimum phase zeros. Seminal results on this topic were reported also in~\cite{Zhou_1998, Seron_2012}.

Performance limitations of stochastic systems in the presence of limited information were analyzed through sensitivity-like function $S(\omega)$ in~\cite{Martin_TAC_2007, Martin_TAC_2008, Fang_TAC_2017, Li_TAC_2013} and the complementary sensitivity-like function $T(\omega)$ in~\cite{Okano_Auto_2009, Ishii_SCL_2011}. Taking an information-theoretic approach was the key to get  Bode integrals extended to  stochastic nonlinear systems. Unlike the frequency-domain approach, which explicitly depends on the input-output relationship of the feedback  systems (transfer function), the focus of the information-theoretic approach is on the signals. The lower bound for sensitivity-like Bode integral for  continuous-time systems was first put forward in~\cite{Li_TAC_2013}:
\begin{equation}\label{eq10}
\dfrac{1}{2\pi}\int_{-\infty}^{\infty} \log |S(\omega)|  d\omega \geq \sum_{\lambda \in \mathcal{UP}} p_i.
\end{equation}
This result can be applied to systems with nonlinear controllers, which is an improvement upon the prior results based on the frequency-domain approach~\cite{Bode_1945, Freudenberg_1985, Freudenberg_1987, Sung_IJC_1988, Zhou_1998, Middleton_1991, Sung_IJC_1989, Seron_2012}. However, to the best of authors' knowledge, a lower bound for the complementary sensitivity-like Bode integral for continuous-time systems has not been derived yet. The unboundedness of the integrand in high frequencies as stated in~\cite{Sung_IJC_1989} and the challenge in representing the weighted Bode-like integral with information-theoretic tools similar to~\cite{Middleton_1991} have been the main obstacles on this path.

In this paper, we provide a partial answer to the question: \textit{What is the relationship between Bode integrals of the (complementary) sensitivity function and the (complementary) sensitivity-like function?} We answer this question for the continuous-time linear feedback system with a wide sense stationary input, while some partial answers on discrete-time systems can be found in~\cite{Martin_TAC_2008, Ishii_SCL_2011}.  We notice that while Kolmogorov's entropy-rate equality has been used for discrete-time systems in~\cite{Martin_TAC_2007, Martin_TAC_2008, Okano_Auto_2009, Ishii_SCL_2011, Fang_TAC_2017} to obtain a lower bound for the sensitivity Bode-like integral, a seminal result on mutual information rates from \cite[p.~181]{Pinsker_1964} was  used in~\cite{Li_TAC_2013} to obtain a similar bound for continuous-time systems. In this paper, we resort to power spectral density (PSD) to analyze the sensitivity and the complementary sensitivity of continuous-time systems. With the convenience brought by this new tool, we first time find a lower bound and an information-theoretic representation for the complementary sensitivity Bode-like integral. The sensitivity properties of an F-16 aircraft in the flight-path angle tracking problem are analyzed.

The paper is organized as follows: \hyperref[sec2]{Section II} introduces the preliminaries on Bode integrals and information theory. \hyperref[sec3]{Section III} investigates the relationship between the sensitivity and the sensitivity-like Bode integrals. \hyperref[sec4]{Section IV} investigates the complementary sensitivity and the complementary sensitivity-like Bode integrals and proposes a lower bound for the latter. \hyperref[sec5]{Section V} presents a numerical example. \hyperref[sec6]{Section VI} draws the conclusion.

%%%%%%%%%%%%%%%%%%%%%%%%%%%%%%%%%%%%%%%%%%%%%%%%%%%%%%%%%%%%%%%%%%%%%%%%%%%%%%%%
%======================================================
%==================Section 2: Preliminaries======================
%======================================================
\section{Preliminaries}\label{sec2}
Consider a continuous-time feedback configuration $\mathcal{P}$ depicted in~\hyperref[fig1]{Figure 1},
\begin{figure}[htpb]\label{fig1}
	\centering
	\vspace{-3em}
	\includegraphics[width=0.48\textwidth]{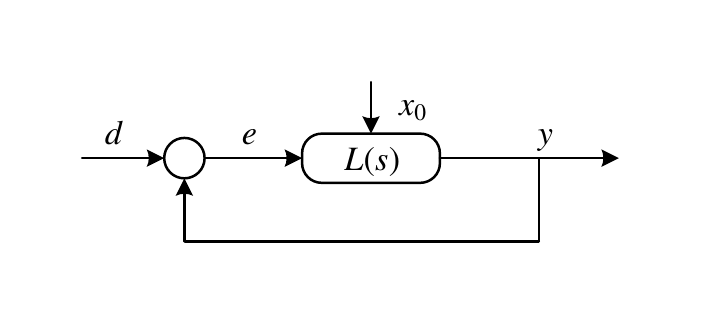}\vspace{-3em}\\
	\caption{Continuous-time feedback control system.}
\end{figure}

\noindent where $d(t)\in\mathbb{R}$ is the disturbance input, $y(t) \in \mathbb{R}$ is the output, $e(t) = d(t) - y(t)$ is the error signal, $x(t) \in \mathbb{R}^n$ is the state, and $L(s)$ denotes the open-loop transfer function from $e(t)$ to $y(t)$
\begin{equation}
L(s) = L(j\omega) = \int_{0}^{\infty} l(t) \cdot {\rm e}^{-j\omega t} dt,
\end{equation}
with $l(t)$ being the impulse response of the system. In a deterministic setting, the initial condition $x_0$ in the configuration of~\hyperref[fig1]{Figure 1} is assumed zero. In a stochastic setting, one assumes that  the differential entropy of the initial condition is  finite~\cite{Martin_TAC_2007, Martin_TAC_2008, Fang_TAC_2017, Li_TAC_2013}.  Further discussion on these two different types of initial conditions is available in~\cite{Ishii_SCL_2011}. Let the open-loop transfer function $L(s)$ in~\hyperref[fig1]{Figure~1} be
\begin{equation}\label{eq2}
L(s) = \dfrac{Y(s)}{E(s)} = c \cdot \dfrac{\prod_{j = 1}^m (s - z_j)}{\prod_{i=1}^n(s - p_i)},
\end{equation}
where $m \leq n$, and $c>0$. Inspired by~\cite{Middleton_1991}, consider the following frequency transformation
\begin{equation}\label{eq30}
\tilde{s} = j\tilde{\omega} = (j\omega)^{-1} = s^{-1},
\end{equation}
where $\tilde{\omega} = - \omega^{-1}$. Applying~\eqref{eq30} to transfer function~\eqref{eq2}, the system with following transfer function $\tilde{L}(\tilde{s})$ is defined as the auxiliary system:
\begin{equation}
\tilde{L}(\tilde{s}) = \dfrac{\tilde{Y}(\tilde{s})}{\tilde{E}(\tilde{s})} = c \cdot \dfrac{\tilde{s}^n \cdot \prod_{j=1}^m (1 - \tilde{s} \cdot z_j)}{\tilde{s}^m \cdot \prod_{i=1}^{n} (1 - \tilde{s} \cdot p_i)} = L(s),
\end{equation}

\noindent which can be depicted by the diagram in~\hyperref[fig2]{Figure 2}.
\begin{figure}[htpb]
	\centering\label{fig2}
	\vspace{-3em}
	\includegraphics[width=0.48\textwidth]{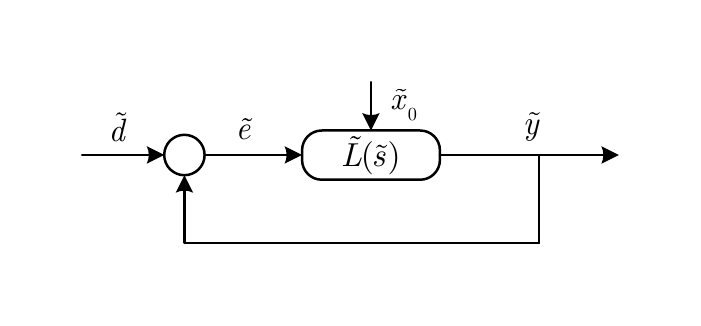}\vspace{-3em}\\
	\caption{Auxiliary system.}
\end{figure}

\noindent The Laplace transforms of the signals in the auxiliary system and the signals in the original system satisfy
\begin{equation}\label{eq66}
\tilde{D}(\tilde{s}) = \tilde{D}(s^{-1}) = D(\tilde{s}^{-1}) = D(s),
\end{equation}
which will also hold if $d$ is replaced by $e$ or $y$. It is worth noting that although the auxiliary system $\tilde{L}(\tilde{s})$ may not be proper, no intermediate result will be derived from this auxiliary system. The inverse system $\tilde{L}^{-1}(\tilde{s})$ is defined by swapping the input $\tilde{e}$ and the output $\tilde{y}$ of the auxiliary system. The transfer function of this inverse system then becomes:
\begin{equation}
{\tilde{L}}^{-1}(\tilde{s}) = \dfrac{\tilde{E}(\tilde{s})}{\tilde{Y}(\tilde{s})} = \dfrac{1}{c  }\cdot \dfrac{1}{\tilde{s}^{n-m}} \cdot \dfrac{ \prod_{i=1}^{n} (1 - \tilde{s} \cdot p_i)}{ \prod_{j=1}^m (1 - \tilde{s} \cdot z_j)},
\end{equation}
\noindent which is illustrated in~\hyperref[fig3]{Figure 3}.
\begin{figure}[htpb]
	\centering\label{fig3}
	\vspace{-3em}
	\includegraphics[width=0.48\textwidth]{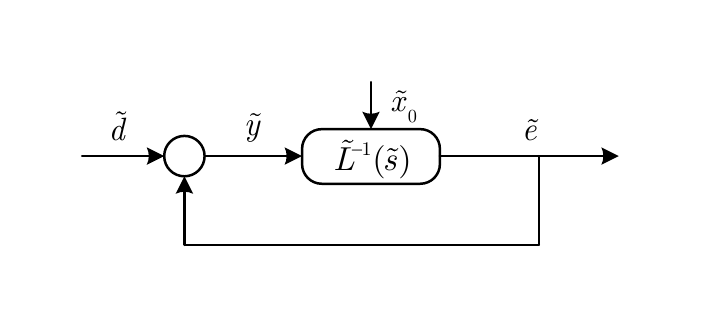}\vspace{-3.5em}\\
	\caption{Inverse of auxiliary system.}
\end{figure}

\noindent One can easily verify that if all the closed-loop poles of the original system is stable, the closed-loop poles of the inverse system will also be stable. To generalize the results of this paper to MIMO systems, interested readers can refer to~\cite{Ishii_SCL_2011, Fang_2017}. Before we continue to formulate the (complementary) sensitivity analysis problem, some basic definitions are given below following~\cite{Li_TAC_2013, Cover_2012}.

\begin{definition}[Wide Sense Stationary]
	A second order random process $\{x\}$ is called wide sense stationary, if
	\begin{equation}
	\begin{split}
	\mathbb{E}[x(t)] &= \mathbb{E}[x(t+v)],\\
	\textrm{Cov}[x(t), x(t+\tau)] &= \textrm{Cov}[x(v), x(v+\tau)],
	\end{split}
	\end{equation}
	where $\mathbb{E}$ denotes expectation.
\end{definition}

\begin{definition} \textbf{(Mutual Information \& Mutual Information Rate)}
	The mutual information between two continuous-time stochastic processes $x$ and $y$ is defined as
	\begin{equation}
	I(x;y) = \int_Y\int_X f(x,y) \log \dfrac{f(x,y)}{f(x)f(y)} dxdy,
	\end{equation}
	where $f(x,y)$ is the joint probability distribution function, and $f(x)$ and $f(y)$ are the marginal probability distribution functions. The mutual information rate is defined as
	\begin{equation}
	I_\infty(x; y) = \lim_{t \rightarrow \infty} \dfrac{I(x^t; y^t)}{t} .
	\end{equation}
\end{definition}

\begin{definition}[Class $\mathbb{F}$ Function\textnormal{; See \cite{Li_TAC_2013} or \cite[p.~182]{Pinsker_1964}}]  We define class $\mathbb{F}$  function in the following way:
	\begin{equation}
	\mathbb{F} = \{l:l(\omega) = p(\omega)(1-\varphi(\omega)), l(\omega) \in \mathbb{C}, \omega \in \mathbb{R} \},
	\end{equation}
	where $p(\cdot)$ is rational and $\varphi(\cdot)$ is a measurable function, such that $0 \leq \phi \leq 1$ for all $\omega \in \mathbb{R}$ and $\int_{\mathbb{R}}|\log(1-\varphi(\omega))| d\omega < \infty$.
\end{definition}

The sensitivity function $S(j\omega)$ of the feedback system in \hyperref[fig1]{Figure 1} is defined as the closed-loop transfer function from the disturbance input $d$ to the tracking error $e$:
\begin{equation}\label{eq3}
S(j\omega) = \dfrac{E(j\omega)}{D(j\omega)} = \dfrac{1}{1 + L(j\omega)} .
\end{equation}
The complementary sensitivity function $T(j\omega)$ is defined as the closed-loop transfer function from the disturbance input $d$ to the measurement output $y$:
\begin{equation}\label{eq4}
T(j\omega) = \dfrac{Y(j\omega)}{D(j\omega)}= \dfrac{L(j\omega)}{1+L(j\omega)}.
\end{equation}
The integrals of $S(j\omega)$ and $T(j\omega)$ over the whole frequency domain are referred to as Bode integrals and satisfy the following equalities~\cite{Freudenberg_1985,Freudenberg_1987,Seron_2012}:
\begin{equation}\label{eq5}
\scalefont{0.95}{\dfrac{1}{2\pi}\int_{-\infty}^{\infty} \log \left|\dfrac{S(j\omega)}{S(j\infty)}\right|  d\omega =  \lim_{s\rightarrow\infty} \dfrac{s[S(s) - S(\infty)]}{2 \cdot S(\infty)}+  \sum_{p_i \in \mathcal{UP}} p_i},
\end{equation}
\begin{equation}\label{eq6}
\dfrac{1}{2\pi}\int_{-\infty}^{\infty} \log \left|\dfrac{T(j\omega)}{T(0)}\right|  \dfrac{d\omega}{\omega^2} = \dfrac{1}{2T(0)} \lim_{s\rightarrow 0} \dfrac{d T(s)}{ds} + \sum_{z_i \in \mathcal{UZ}} \frac{1}{z_i},
\end{equation}
where $\mathcal{UP}$ and $\mathcal{UZ}$ respectively denote the set of unstable poles and the set of non-minimum phase zeros of the plant $\mathcal{P}$. Since~\eqref{eq5} and~\eqref{eq6} are derived in frequency domain using transfer functions, they cannot be applied to nonlinear systems.

Starting with~\cite{Martin_TAC_2007, Martin_TAC_2008}, information theoretic tools were leveraged to derive performance limitations and Bode-like results for nonlinear systems. Instead of considering the sensitivity function $S(j\omega)$, in~\cite{Martin_TAC_2008, Li_TAC_2013} sensitivity-like function $S(\omega)$ was introduced based on the properties of signals:
\begin{equation}\label{s_like}
S(\omega) = \sqrt{\dfrac{\phi_e(\omega)}{\phi_d(\omega)}},
\end{equation}
\noindent where $\phi_x(\omega)$ denotes the PSD of a stationary signal $x$:
\begin{equation}
\phi_x(\omega) = \int_{-\infty}^{\infty} r_x(\tau) \cdot  {\rm e}^{-j\omega\tau} d\tau,
\end{equation}
and $r_x(\tau) = r_{xx}(t + \tau,t)$ denotes the auto-covariance of the signal $x$ with
\begin{equation*}
r_{xy}(v,t) = \textrm{Cov}[x(v), y(t)] .
\end{equation*}
The complementary sensitivity-like function was defined  for discrete-time systems in~\cite{Okano_Auto_2009}. Following the same philosophy, the following definition of the complementary sensitivity-like function is adopted in this paper:
\begin{equation}\label{eq9}
T(\omega) = \sqrt{\dfrac{\phi_y(\omega)}{\phi_d(\omega)}}.
\end{equation}
As we mentioned previously, the lower bound for Bode integral of $T(\omega)$ in continuous-time systems has not been studied yet. In the following sections, we first discuss the relationship between the (complementary) sensitivity and the (complementary) sensitivity-like Bode integrals and then propose a lower bound for the Bode integral of $T(\omega)$. Some lemmas and assumptions that we adopt in this paper are listed next.
\begin{lemma}\label{lem1}
	(See \cite{Li_TAC_2013} or \cite[p.~181]{Pinsker_1964}) Suppose that two one-dimensional continuous-time processes $x$ and $y$ form a stationary Gaussian process $(x, y)$. Then
	\begin{equation}
	I_\infty(x,y) \geq -\dfrac{1}{4\pi}\int_{-\infty}^{\infty}\log\left(1 - \dfrac{|\phi_{xy}(\omega)|^2}{\phi_x(\omega)\phi_y(\omega)}\right) d\omega.
	\end{equation}
	The equality holds, if $\phi_x$ and $\phi_y$ belong to the class $\mathbb{F}$.
\end{lemma}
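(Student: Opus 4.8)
The plan is to reduce the mutual information rate to a spectral integral through the Gaussian mutual information formula together with Szeg\H{o}-type asymptotics for truncated covariance (Wiener--Hopf) operators. First I would recall that for a pair of jointly Gaussian random vectors $X$ and $Y$ the mutual information has the closed form
\begin{equation*}
I(X;Y) = -\frac{1}{2}\sum_{k}\log\left(1-\rho_k^2\right),
\end{equation*}
where $\{\rho_k\}$ are the canonical correlation coefficients, that is, the singular values of the normalized cross-covariance $R_X^{-1/2}R_{XY}R_Y^{-1/2}$. Specializing to the truncations $x^t$ and $y^t$ on $[0,t]$, the entire problem reduces to identifying the asymptotic distribution of $\{\rho_k(t)\}$ as $t\to\infty$.

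Next I would bring in the spectral theory of stationary Gaussian processes. The covariance operators of $x^t$ and $y^t$ are truncated convolution operators with symbols $\phi_x(\omega)$ and $\phi_y(\omega)$, while the cross-covariance operator has symbol $\phi_{xy}(\omega)$; hence the normalized operator above has symbol of modulus $\gamma(\omega)$, where $\gamma^2(\omega)=|\phi_{xy}(\omega)|^2/(\phi_x(\omega)\phi_y(\omega))$ is the coherence. By the Szeg\H{o} limit theorem for such operators, the empirical distribution of the squared canonical correlations converges weakly, so that for every bounded continuous $g$ with $g(0)=0$,
\begin{equation*}
\frac{1}{t}\sum_{k} g\left(\rho_k^2(t)\right)\;\xrightarrow[t\to\infty]{}\;\frac{1}{2\pi}\int_{-\infty}^{\infty} g\left(\gamma^2(\omega)\right)d\omega .
\end{equation*}

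To extract the inequality I would test this convergence against the truncated integrand $g_M(u)=\min\left\{-\frac{1}{2}\log(1-u),\,M\right\}$, which is bounded, continuous, vanishes at $u=0$, and satisfies $g_M(u)\le-\frac{1}{2}\log(1-u)$. Combining the displayed identity with the Gaussian formula gives $I(x^t;y^t)/t\ge t^{-1}\sum_k g_M(\rho_k^2(t))$, and letting $t\to\infty$ and then $M\to\infty$ (monotone convergence on the right-hand side) yields $I_\infty(x,y)\ge-\frac{1}{4\pi}\int_{-\infty}^{\infty}\log(1-\gamma^2(\omega))\,d\omega$, which is the claimed bound. The logarithmic blow-up of the integrand wherever $\gamma(\omega)\to1$ is exactly what forces a one-sided estimate in general: the truncation discards the contribution of any band on which the two processes become perfectly coherent, and such a band can only raise $I_\infty$.

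For the equality under the class $\mathbb{F}$ hypothesis I would argue that membership of $\phi_x,\phi_y$ in $\mathbb{F}$ renders the processes purely nondeterministic and guarantees $\int_{\mathbb{R}}\left|\log(1-\gamma^2(\omega))\right|d\omega<\infty$, so that $-\frac{1}{2}\log(1-\gamma^2)$ is genuinely integrable. Under this integrability the family $\{g_M\}$ is uniformly integrable against the limiting spectral measure, no mass escapes to the singular set, and the weak convergence upgrades to convergence of the integrals themselves, collapsing the inequality to an equality. The step I expect to be the main obstacle is making the operator-theoretic limit rigorous: one must justify that products and ratios of Wiener--Hopf operators with possibly unbounded symbols obey the product-symbol Szeg\H{o} calculus, and one must control the coherence near unity---precisely the regularity that the class $\mathbb{F}$ condition is engineered to supply.
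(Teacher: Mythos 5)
The paper does not actually prove this lemma: it is imported as a known result from \cite[p.~181]{Pinsker_1964} (see also \cite{Li_TAC_2013}), and no proof appears in the appendices. Your proposal therefore has to stand on its own as a proof of Pinsker's theorem. Its skeleton --- the canonical-correlation formula for Gaussian mutual information, a Szeg\H{o}-type limit identifying the asymptotic empirical distribution of the canonical correlations of $(x^t,y^t)$ with the coherence $\gamma^2(\omega)=|\phi_{xy}(\omega)|^2/\left(\phi_x(\omega)\phi_y(\omega)\right)$, and the extraction of the one-sided bound by testing against bounded minorants $g_M$ and applying monotone convergence --- is a legitimate classical route (in the spirit of Gelfand--Yaglom), and the truncation logic for the inequality is sound \emph{conditional on} the limit theorem.

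The genuine gap is that this limit theorem is asserted, not proved, and it is exactly where the entire content of the lemma lives. For finite $t$ the operator $R_X^{-1/2}R_{XY}R_Y^{-1/2}$ built from truncated covariance operators is \emph{not} a truncated Wiener--Hopf operator with symbol of modulus $\gamma(\omega)$: truncation commutes with neither products, nor inverses, nor square roots, and there is no off-the-shelf Szeg\H{o} theorem for this composite object with merely measurable symbols that may vanish, blow up, or have coherence approaching one. Establishing that asymptotic symbol calculus rigorously is essentially equivalent to proving the lemma itself, so the proposal is a roadmap rather than a proof. A second, more localized error concerns the equality case: class $\mathbb{F}$ is a structural condition on $\phi_x$ and $\phi_y$ \emph{separately} (each is a rational function times a factor $1-\varphi$ with $\int_{\mathbb{R}}|\log(1-\varphi(\omega))|\,d\omega<\infty$); it does not assert, nor does it imply, that $\int_{\mathbb{R}}\left|\log\left(1-\gamma^2(\omega)\right)\right|d\omega<\infty$. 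Taking $y=x$ with $\phi_x\in\mathbb{F}$ gives $\gamma\equiv 1$ and a divergent right-hand side, so the uniform-integrability upgrade you invoke cannot follow from the stated hypothesis; whatever argument closes the equality case must use the rational-plus-perturbation structure of class $\mathbb{F}$ in a more essential way, which is how Pinsker's original proof proceeds.
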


\begin{assumption}\label{ass1}
	The disturbance input $d(t)$ is a zero-mean wide sense stationary process.
\end{assumption}
\begin{remark}
	Compared with~\cite{Martin_TAC_2007, Fang_TAC_2017, Martin_TAC_2008, Okano_Auto_2009}, which assumed that $d$ is an asymptotically stationary process,  \hyperref[ass1]{Assumption~1} is relatively stringent. However, this assumption is commonly adopted among the results on continuous-time systems in terms of signals,~\cite{Astrom_2012, Li_TAC_2013}.
\end{remark}

\begin{assumption}\label{ass2}
	For the transfer function $L(s)$ the amount of zeros at $s = 0$ does not exceed the amount of poles at $s = 0$.
\end{assumption}

\begin{remark}
	We only adopt this assumption when establishing a lower bound for the complementary sensitivity-like Bode integral. This assumption ensures that the inverse system $\tilde{L}^{-1}(\tilde{s})$  is proper, \textit{e.g.} for a double integrator vehicle with first order actuator dynamics $L(s) = 1/[s^2\cdot (0.1s+1)]$ from~\cite{Seiler_TAC_2004}, we have $\tilde{L}^{-1} = (\tilde{s}+0.1)/\tilde{s}^3$. Similar assumption was adopted in~\cite{Middleton_2010}, when investigating the string instability (sensitivity) via a frequency-domain approach.
\end{remark}

%======================================================
%========Section 3: Sensitivity and Sensitivity-Like Functions=============
%======================================================
\section{Sensitivity and Sensitivity-Like Functions}\label{sec3}
We first investigate the relationship between  Bode integrals of sensitivity function $S(j\omega)$ and sensitivity-like function $S(\omega)$ of the closed-loop configuration in~\hyperref[fig1]{Figure 1}. The following theorem states this relationship.
\begin{theorem}\label{thm1}
	\textit{When the disturbance input $d(t)$ is wide sense stationary,  Bode integrals of the sensitivity and the sensitivity-like functions satisfy}
	\begin{equation}\label{eq1_0}
	\dfrac{1}{2\pi}\int_{-\infty}^{\infty} \log {S}(\omega) d\omega = \dfrac{1}{2\pi}\int_{-\infty}^{\infty} \log |S(j\omega)|  d\omega.
	\end{equation}
\end{theorem}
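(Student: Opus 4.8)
The plan is to reduce the equality of Bode integrals to a pointwise identity between the sensitivity-like function $S(\omega)$ and the modulus of the sensitivity transfer function $|S(j\omega)|$, by exploiting the way power spectral densities transform under a linear time-invariant filter driven by a wide sense stationary input. Concretely, I expect to show $S(\omega) = |S(j\omega)|$ for almost every $\omega$, which yields \eqref{eq1_0} at once — indeed a stronger statement than the claimed equality of integrals.

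First I would observe that, by~\eqref{eq3}, the error $e$ is the output of the linear time-invariant filter with frequency response $S(j\omega) = 1/(1+L(j\omega))$ acting on the disturbance $d$. Assuming the closed loop is stable, \hyperref[ass1]{Assumption~1} (that $d$ is zero-mean wide sense stationary) guarantees that in steady state $e$ is again wide sense stationary, so $\phi_e(\omega)$ is well defined. Writing $h_S$ for the impulse response associated with $S(j\omega)$, we have $e(t) = \int_{-\infty}^{\infty} h_S(t-\sigma)\,d(\sigma)\,d\sigma$. The next step is to compute the auto-covariance of $e$ in terms of that of $d$: substituting the convolution representation and using the stationarity $r_{dd}(\sigma,u)=r_d(\sigma-u)$, a direct change of variables gives
\[
r_e(\tau) = \int_{-\infty}^{\infty}\!\!\int_{-\infty}^{\infty} h_S(a)\,h_S(b)\,r_d(\tau - a + b)\,da\,db,
\]
which depends on $\tau$ alone (reconfirming that $e$ is wide sense stationary) and is the convolution $h_S \ast \check{h}_S \ast r_d$ with $\check{h}_S(t) = h_S(-t)$. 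Taking Fourier transforms and invoking the convolution theorem together with the definition of $\phi_x$, and using that $h_S$ is real so that the transform of $\check{h}_S$ is $\overline{S(j\omega)}$, produces the classical linear-filtering spectral relation
\[
\phi_e(\omega) = |S(j\omega)|^2\,\phi_d(\omega).
\]

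Substituting this into the definition~\eqref{s_like} of the sensitivity-like function gives, wherever $\phi_d(\omega)>0$,
\[
S(\omega) = \sqrt{\dfrac{\phi_e(\omega)}{\phi_d(\omega)}} = \sqrt{|S(j\omega)|^2} = |S(j\omega)|,
\]
so the two integrands in~\eqref{eq1_0} agree pointwise and the equality of the Bode integrals follows immediately.

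The main obstacle is the careful justification of the spectral relation $\phi_e(\omega) = |S(j\omega)|^2\phi_d(\omega)$: one must verify that the interchange of expectation and integration is legitimate, that the filtered process is genuinely wide sense stationary (which relies on closed-loop stability so that transients decay and the steady-state covariance is time-invariant), and that the Fourier transforms involved converge. A secondary, more cosmetic issue is the treatment of frequencies at which $\phi_d(\omega)=0$, where the ratio in~\eqref{s_like} is undefined; since such a set contributes nothing to either integral, it can be excluded by convention or absorbed into a null set without affecting the conclusion.
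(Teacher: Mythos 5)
Your proposal is correct, and it rests on the same key fact as the paper's proof --- the pointwise identity $S(\omega)=|S(j\omega)|$ (a.e.), from which the equality of the integrals in~\eqref{eq1_0} is immediate --- but you reach that identity by a different decomposition. You filter the disturbance through the \emph{closed-loop} sensitivity: writing $e=h_S\ast d$ with $h_S$ the impulse response of $S(j\omega)=1/(1+L(j\omega))$ from~\eqref{eq3}, you invoke the one-shot spectral relation $\phi_e(\omega)=|S(j\omega)|^2\phi_d(\omega)$ and divide. The paper never constructs $h_S$: it decomposes $d=e+y$, uses only the \emph{open-loop} convolution $y(t)=\int_0^\infty l(v')e(t-v')\,dv'$, computes the four spectra $\phi_e$, $\phi_{ey}=L(-j\omega)\phi_e$, $\phi_{ye}=L(j\omega)\phi_e$, $\phi_y=L(-j\omega)L(j\omega)\phi_e$, and assembles $\phi_d(\omega)=[1+L(-j\omega)][1+L(j\omega)]\,\phi_e(\omega)$, i.e.\ $S(\omega)=\sqrt{S(-j\omega)S(j\omega)}=|S(j\omega)|$. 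The trade-off is symmetric: your route uses \hyperref[ass1]{Assumption~1} exactly as stated ($d$ is the stationary primitive and stationarity of $e$ becomes a conclusion), but it needs closed-loop stability so that $h_S$ is a well-defined stable causal filter --- a hypothesis not written in \hyperref[thm1]{Theorem~1}, which you correctly flag; the paper's route needs only the given open-loop impulse response $l$, but it \emph{implicitly} assumes $e$ is wide sense stationary (its covariance computations start from $r_e(v,t)=r_e(\tau)$), which requires the same stability in disguise, so neither argument is more rigorous on this point. One small correction to your closing remark: on a set where $\phi_d(\omega)=0$ (hence also $\phi_e(\omega)=0$), the left-hand integrand $\log S(\omega)$ is undefined rather than zero, while $\log|S(j\omega)|$ need not vanish there; so that set cannot be said to "contribute nothing to either integral" --- one must either assume it is a null set or adopt the convention that $S(\omega)$ is read as $|S(j\omega)|$ on it (the paper's own caveat "$\phi_e(\omega)\not\equiv 0$" is equally loose).
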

\begin{proof}
	The proof  is given in \hyperref[appA]{Appendix A}.
\end{proof}

%======================================================
%========Section 4: Complementary Sensitivity and Sensitivity-Like Functions======
%======================================================
\section{Complementary Sensitivity and Sensitivity-Like Functions}\label{sec4}
The relationship between Bode integrals of complementary sensitivity function $T(j\omega)$ and the complementary sensitivity-like function $T(\omega)$ in~\hyperref[fig1]{Figure 1} is summarized in the following corollary.
\begin{corollary}\label{thm2}
	\textit{When the disturbance input $d(t)$ is wide sense stationary,  Bode integrals of complementary sensitivity function $T(j\omega)$ and complementary sensitivity-like function ${T}(\omega)$ satisfy}
	\begin{equation}\label{eq24}
	\frac{1}{2\pi} \int_{-\infty}^{\infty} \log{T}(\omega)  \frac{d\omega}{\omega^2} = \frac{1}{2\pi} \int_{-\infty}^{\infty} \log |T(j\omega)|  \dfrac{d\omega}{\omega^2}.
	\end{equation}
\end{corollary}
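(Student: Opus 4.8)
My plan is to establish the pointwise spectral identity $T(\omega)=|T(j\omega)|$ and then integrate; this is the output-side analogue, with $y$ and $T$ replacing the error $e$ and $S$, of the sensitivity identity that drives \hyperref[thm1]{Theorem~1}, which is exactly why \eqref{eq24} can be stated as a corollary. Under \hyperref[ass1]{Assumption~1} the disturbance $d$ is wide sense stationary, and since the closed loop from $d$ to $y$ is the stable LTI map $T(j\omega)$, the output is (asymptotically) wide sense stationary with
\begin{equation*}
\phi_y(\omega)=|T(j\omega)|^{2}\,\phi_d(\omega).
\end{equation*}
Dividing by $\phi_d(\omega)$ and taking square roots in the definition \eqref{eq9} gives $T(\omega)=\sqrt{\phi_y(\omega)/\phi_d(\omega)}=|T(j\omega)|$ for almost every $\omega$, hence $\log T(\omega)=\log|T(j\omega)|$ pointwise.

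I would then multiply this identity by the weight $1/(2\pi\omega^{2})$ and integrate over $\mathbb{R}$, which makes the two sides of \eqref{eq24} coincide termwise; the equality follows at once provided the weighted integral converges, and the class $\mathbb{F}$ hypothesis together with \hyperref[lem1]{Lemma~1} supplies the integrability needed near $\omega=0$ and $\omega=\infty$. Note that, because the identity is pointwise, the particular weight $d\omega/\omega^{2}$ is immaterial to the equality itself; it is dictated by the form of the classical complementary-sensitivity Bode integral \eqref{eq6} rather than by the proof.

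I would also record the equivalent reading through the inverse auxiliary system of \hyperref[fig3]{Figure~3}, since it explains the weight conceptually. The algebra $T(j\omega)=L(j\omega)/(1+L(j\omega))=1/(1+\tilde L^{-1}(\tilde s))$ shows that $T$ is precisely the sensitivity function of the inverse auxiliary system in the transformed variable $\tilde s=j\tilde\omega$, while $\tilde\omega=-\omega^{-1}$ yields $d\tilde\omega=\omega^{-2}\,d\omega$, so that as $\omega$ ranges over $(-\infty,\infty)$ the image $\tilde\omega$ sweeps $(-\infty,\infty)$ and the weight $d\omega/\omega^{2}$ becomes the plain element $d\tilde\omega$. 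Applying \hyperref[thm1]{Theorem~1} to the inverse auxiliary system and changing variables then reproduces \eqref{eq24} directly.

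I expect the main obstacle to be the rigorous justification of $\phi_y=|T(j\omega)|^{2}\phi_d$: one must argue that the closed-loop output is genuinely wide sense stationary, controlling the transient generated by the initial condition (assumed of finite differential entropy) and justifying the passage to the steady-state spectrum, after which the integrability ensuring convergence of the weighted integral at the origin and at infinity has to be verified. In the auxiliary-system reading there is the further point that the disturbance of the inverse system be wide sense stationary so that \hyperref[thm1]{Theorem~1} is applicable, and that the orientation of the change of variables across the singularity at $\omega=0$ be tracked carefully; the possible impropriety of $\tilde L^{-1}$ is harmless here, consistent with \hyperref[ass2]{Assumption~2} being required only for the lower-bound result.
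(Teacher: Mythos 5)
Your proposal is correct, and its core is the same as the paper's: establish the pointwise spectral identity $T(\omega)=|T(j\omega)|$ and then integrate against the weight $d\omega/\omega^{2}$. The difference lies in how that identity is obtained. You invoke the standard filtering theorem for the closed-loop map from $d$ to $y$, namely $\phi_y(\omega)=|T(j\omega)|^{2}\phi_d(\omega)$, whereas the paper never introduces the closed-loop filter: it reuses the cross-spectral computations from the proof of \hyperref[thm1]{Theorem~1} --- $\phi_{ey}=L(-j\omega)\phi_e$, $\phi_{ye}=L(j\omega)\phi_e$, $\phi_y=L(-j\omega)L(j\omega)\phi_e$, derived from the open-loop convolution $y=l\ast e$, together with the decomposition $\phi_d=\phi_e+\phi_{ey}+\phi_{ye}+\phi_y=[1+L(-j\omega)][1+L(j\omega)]\,\phi_e$ coming from $d=e+y$ --- so that $T(\omega)=\sqrt{T(-j\omega)\,T(j\omega)}=|T(j\omega)|$ whenever $\phi_e\not\equiv 0$; this reuse of \hyperref[thm1]{Theorem~1}'s machinery is exactly why the statement is labeled a corollary. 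Your route is shorter but carries the same implicit burden you yourself flag (stability of the closed loop and stationarity of its output), which the paper's route also sweeps under \hyperref[ass1]{Assumption~1}. Two small corrections: \hyperref[lem1]{Lemma~1} and the class $\mathbb{F}$ hypothesis play no role here --- they enter only in \hyperref[thm3]{Theorem~3} --- and no convergence argument is needed for \eqref{eq24}, since the integrands agree almost everywhere the two weighted integrals agree whether or not they are finite (and, per the paper's remark citing Wu and Jonckheere, they need not be). Your alternative reading through the inverse auxiliary system is also sound, but it is not how the paper proves this corollary; it is instead the mechanism used later, in the proof of \hyperref[thm3]{Theorem~3}, to convert the weighted integral into an unweighted one.
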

\begin{proof}
	The proof is given in \hyperref[appB]{Appendix B}.
\end{proof}

\vspace{0.5em}

From~\hyperref[thm2]{Corollary 2}, we know that Bode integrals of $T(j\omega)$ and $T(\omega)$ are equivalent, when the disturbance input is wide sense stationary. The following theorem gives a lower bound for the Bode integral of $T(\omega)$ in continuous-time setting.

\begin{theorem}\label{thm3}
	\textit{When the original system in~\hyperref[fig1]{Figure 1} is mean-square stable and the inverse frequency noise $\tilde{d}$ is wide sense stationary, one has:
		\begin{equation}
		I_\infty(\tilde{y}; \tilde{e}) - I_\infty(\tilde{d}; \tilde{e}) \geq \sum_{z_i \in \mathcal{UZ}} \dfrac{1}{z_i},
		\end{equation}
		where $\mathcal{UZ}$ is the set of unstable zeros of the plant $\mathcal{P}$, and $\tilde{e}$ and $\tilde{y}$ are the signals defined in the (inverse) auxiliary system. Moreover, when the disturbance input $\tilde{d}$ is Gaussian stationary, the complementary sensitivity-like Bode integral satisfies
		\begin{equation}\label{eq29}
		\frac{1}{2\pi} \int_{-\infty}^{\infty} \log {T}(\omega) \frac{d\omega}{\omega^2} \geq \sum_{z_i \in \mathcal{UZ}} \dfrac{1}{z_i}.
		\end{equation}
	}
\end{theorem}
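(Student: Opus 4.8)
The plan is to transport the complementary-sensitivity question for $\mathcal{P}$ into an ordinary sensitivity question for the inverse auxiliary system $\tilde{L}^{-1}(\tilde{s})$ obtained under the inversion $\tilde{s}=s^{-1}$, and then to apply the sensitivity-like Bode bound \eqref{eq10} there. First I would fix the signal dictionary of the inverse loop: swapping the input and output of the auxiliary system promotes $\tilde{y}$ to the role of its error and $\tilde{e}$ to the role of its output, while the disturbance is unchanged, since $\tilde{e}=\tilde{d}-\tilde{y}$ gives $\tilde{d}_{\mathrm{inv}}=\tilde{e}+\tilde{y}=\tilde{d}$. A one-line computation then shows that the sensitivity of the inverse loop is the complementary sensitivity of the original plant, $\tilde{S}_{\mathrm{inv}}(\tilde{s})=1/(1+\tilde{L}^{-1})=\tilde{L}/(1+\tilde{L})=T(s)$, and that the unstable poles of $\tilde{L}^{-1}$ sit at $\tilde{s}=1/z_j$ for each non-minimum-phase zero $z_j$, so that their sum is exactly $\sum_{z_i\in\mathcal{UZ}}1/z_i$; the remaining poles of $\tilde{L}^{-1}$ at $\tilde{s}=0$ are only marginal and will later be absorbed by the $1/\omega^2$ weighting. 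It is here that \hyperref[ass2]{Assumption~2} enters, as it is precisely the condition rendering $\tilde{L}^{-1}$ proper, which together with the closed-loop stability transfer already noted for the inverse system makes the inverse loop an admissible feedback configuration.

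For the first inequality I would apply the information-theoretic content of \eqref{eq10}---valid, as stressed in the introduction, beyond the linear Gaussian setting---to the inverse loop, and read off its left-hand side through the identity
\[ I_\infty(\tilde{y};\tilde{e})-I_\infty(\tilde{d};\tilde{e})=h_\infty(\tilde{y})-h_\infty(\tilde{d}), \]
where $h_\infty$ denotes the differential entropy rate. I would prove this by expanding each rate as $h_\infty(\cdot)-h_\infty(\cdot\mid\tilde{e})$ and observing that the inverse-loop error relation $\tilde{y}=\tilde{d}-\tilde{e}$ makes $\tilde{y}$ and $\tilde{d}$ differ, conditionally on $\tilde{e}$, by a quantity measurable with respect to $\tilde{e}$; shift-invariance of conditional differential entropy then yields $h_\infty(\tilde{y}\mid\tilde{e})=h_\infty(\tilde{d}\mid\tilde{e})$, so the conditional terms cancel even though the individual rates may be infinite. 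Since $\tilde{y}$ and $\tilde{d}$ are the error and the disturbance of the inverse loop, whose unstable poles are the $1/z_i$, the right-hand side is precisely the entropy-rate form of the sensitivity-like Bode integral that \eqref{eq10} controls, delivering $I_\infty(\tilde{y};\tilde{e})-I_\infty(\tilde{d};\tilde{e})\geq\sum_{z_i\in\mathcal{UZ}}1/z_i$. This step uses only that $\tilde{d}$ is WSS and that the loop is mean-square stable.

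To reach \eqref{eq29} I would specialize to Gaussian $\tilde{d}$ and chain together the equalities already available. By \hyperref[thm2]{Corollary~2}, $\tfrac{1}{2\pi}\int_{-\infty}^{\infty}\log T(\omega)\,d\omega/\omega^2=\tfrac{1}{2\pi}\int_{-\infty}^{\infty}\log|T(j\omega)|\,d\omega/\omega^2$. Substituting $\tilde{\omega}=-\omega^{-1}$, whence $d\tilde{\omega}=d\omega/\omega^2$, and using $T(j\omega)=\tilde{S}_{\mathrm{inv}}(j\tilde{\omega})$, this becomes $\tfrac{1}{2\pi}\int_{-\infty}^{\infty}\log|\tilde{S}_{\mathrm{inv}}(j\tilde{\omega})|\,d\tilde{\omega}$; \hyperref[thm1]{Theorem~1}, applied to the inverse loop, identifies this with its sensitivity-like Bode integral $\tfrac{1}{4\pi}\int_{-\infty}^{\infty}\log(\phi_{\tilde{y}}/\phi_{\tilde{d}})\,d\tilde{\omega}$. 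Under Gaussianity \hyperref[lem1]{Lemma~1} holds with equality on the class $\mathbb{F}$, so this spectral integral equals $h_\infty(\tilde{y})-h_\infty(\tilde{d})=I_\infty(\tilde{y};\tilde{e})-I_\infty(\tilde{d};\tilde{e})$, and the bound of the previous paragraph then yields $\tfrac{1}{2\pi}\int_{-\infty}^{\infty}\log T(\omega)\,d\omega/\omega^2\geq\sum_{z_i\in\mathcal{UZ}}1/z_i$.

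The step I expect to be the genuine obstacle is making the transfer of \eqref{eq10} to the inverse loop rigorous rather than merely formal: I must verify that the inverse configuration is itself a mean-square-stable feedback system driven by a WSS (resp. Gaussian) disturbance with a proper open loop, that the marginal poles of $\tilde{L}^{-1}$ at $\tilde{s}=0$ do not destroy convergence of the integral once the $1/\omega^2$ weight is removed in the $\tilde{\omega}$ variable, and that the entropy-cancellation identity of the second paragraph remains legitimate in the regime where $I_\infty(\tilde{y};\tilde{e})$ and $I_\infty(\tilde{d};\tilde{e})$ are separately infinite. By comparison, the change of variables and the appeals to \hyperref[thm1]{Theorem~1} and \hyperref[thm2]{Corollary~2} are routine once the admissibility of the inverse loop is secured.
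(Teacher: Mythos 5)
Your overall route is the paper's own: transport the problem through the inversion $\tilde{s}=s^{-1}$, observe that the weighted complementary sensitivity-like integral of the original loop becomes the unweighted sensitivity-like integral of the inverse auxiliary system, identify the unstable poles of $\tilde{L}^{-1}(\tilde{s})$ with $1/z_i$ for $z_i\in\mathcal{UZ}$, and then invoke the continuous-time sensitivity-like bound on that inverse loop. Your signal dictionary ($\tilde{y}$ as the inverse loop's error, $\tilde{e}$ as its output, $\tilde{d}$ unchanged), the role of \hyperref[ass2]{Assumption~2} in making $\tilde{L}^{-1}$ proper, and the final chaining through \hyperref[thm2]{Corollary~2} and the substitution $d\tilde{\omega}=d\omega/\omega^2$ all coincide with the paper's proof.

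The one place you genuinely deviate is also the step that fails. The paper never proves the mutual-information-rate inequality for the inverse loop; it cites Theorem~4.8 of \cite{Li_TAC_2013}, which already states both that $I_\infty(y;e)-I_\infty(d;e)$ is lower bounded by the sum of unstable open-loop poles (under mean-square stability and WSS disturbance) and that this difference equals the sensitivity-like Bode integral when the disturbance is Gaussian stationary. You instead reconstruct that theorem via the identity $I_\infty(\tilde{y};\tilde{e})-I_\infty(\tilde{d};\tilde{e})=h_\infty(\tilde{y})-h_\infty(\tilde{d})$, obtained by expanding each rate as $h_\infty(\cdot)-h_\infty(\cdot\mid\tilde{e})$ and cancelling the conditional terms. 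That is the Martins--Dahleh discrete-time argument, and it does not transfer to continuous time: differential entropy rates of continuous-time waveforms are not well defined (the conditional decomposition $I_\infty=h_\infty-h_\infty(\cdot\mid\cdot)$ has no meaning there), which is exactly the obstruction that forced \cite{Li_TAC_2013}, following \cite[p.~181]{Pinsker_1964}, to work with mutual information rates and the spectral characterization of \hyperref[lem1]{Lemma~1} instead of entropy rates --- a point the paper's own introduction emphasizes. The same problem recurs in your Gaussian-case chain, where $h_\infty$ is again interposed between the spectral integral and the mutual-information-rate difference. You correctly flag this as the ``genuine obstacle,'' but the repair is not to patch the entropy identity; it is to delete it and apply the already-established continuous-time result (Theorem~4.8 of \cite{Li_TAC_2013}) directly to the inverse loop, whose admissibility (properness via \hyperref[ass2]{Assumption~2}, stability transfer, WSS $\tilde{d}$) you have already verified. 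With that substitution your argument becomes precisely the paper's proof.
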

\begin{proof}
	By the frequency transform~\eqref{eq30}, we can rewrite the complementary sensitivity-like Bode integral defined in~\eqref{eq24} as follows
	\begin{equation}\label{eq24_2}
	\begin{split}
	\frac{1}{2\pi} \int_{-\infty}^{\infty} \log {T}(\omega) \frac{d\omega}{\omega^2} & = \frac{1}{2\pi} \int_{-\infty}^{\infty} \log {T}(-\tilde{\omega}^{-1}) d \tilde{\omega}\\
	%& = \frac{1}{2\pi} \int_{-\infty}^{\infty} \log {T}(\omega) d \bar{\omega}\\
	& = \frac{1}{2\pi} \int_{-\infty}^{\infty} \log {\tilde T}(\tilde{\omega}) d \tilde{\omega},\\
	\end{split}
	\end{equation}
	where by~\hyperref[thm2]{Corollary~2} the complementary sensitivity-like function of auxiliary system $\tilde{T}(\tilde{\omega})$ satisfies
	\begin{equation}\label{eq25}
	\tilde{T}(\tilde{\omega}) = \sqrt{\dfrac{\phi_{\tilde{y}}(\tilde{\omega})}{\phi_{\tilde{d}}(\tilde{\omega})}} = \sqrt{\dfrac{\phi_{{y}}(-\tilde{\omega}^{-1})}{\phi_{{d}}(-\tilde{\omega}^{-1})}} = T(\omega).
	\end{equation}
	Meanwhile, since the complementary sensitivity-like function of the auxiliary system is identical to the sensitivity-like function of the inverse system, our task becomes to seek  a lower bound for the sensitivity Bode-like integral for the inverse system shown in~\hyperref[fig3]{Figure 3}. Since the inverse frequency noise $\tilde{d}$ is a wide sense stationary process, applying Theorem 4.8 in~\cite{Li_TAC_2013} to the inverse system, we have
	\begin{equation}\label{eq_36}
	I_{\infty}(\tilde{y}; \tilde{e}) - I_{\infty}(\tilde{d}; \tilde{e}) \geq \sum_{z_i \in \mathcal{UZ}} \dfrac{1}{z_i}.
	\end{equation}
	When the disturbance $\tilde d$ is stationary Gaussian, according to~\eqref{eq25} and Theorem 4.8 in~\cite{Li_TAC_2013}, we have
	\begin{equation}\label{eq28}
	\begin{split}
	&\frac{1}{2\pi} \int_{-\infty}^{\infty} \log {T}(\omega) \frac{d\omega}{\omega^2} = \frac{1}{2\pi} \int_{-\infty}^{\infty} \log {\tilde T}(\tilde{\omega}) d \tilde{\omega} \\
	&\hspace{7em} = I_{\infty}(\tilde{y}; \tilde{e}) - I_{\infty}(\tilde{d}; \tilde{e})  \geq \sum_{z_i \in \mathcal{UZ}} \dfrac{1}{z_i}.
	\end{split}
	\end{equation}
	This completes the proof. \hfill \qed
\end{proof}

\begin{remark}
	Since $\log T(\omega) = \log |T(j\omega)|$ tends to infinity as $\omega \rightarrow \infty$, similar to~\eqref{eq6}, we define the Bode-like integral of $T(\omega)$ with a weighting factor $1 / \omega^2$ in~\eqref{eq29}. 
	%Nevertheless, except from the boundedness it brought to the frequency-domain approaches, which have plenty of tools for deriving and manipulating with respect to frequency $\omega$, 
	We note that this weighting factor  induces some restrictions when analyzing the complementary sensitivity via  information-theoretic approach, such as the requirement of stationary Gaussian condition on the inverse frequency signal.
\end{remark}

\begin{remark}
	When the disturbance ${\tilde d}$ is Gaussian stationary and the initial condition ${\tilde x}_0$ is Gaussian, by \hyperref[lem1]{Lemma 1} we can express the mutual information rate $I_\infty(\tilde{y}, \tilde{e})$ in terms of the density functions of $e$ and $y$:
	\begin{equation}
	\begin{split}
	I_\infty(\tilde{y}, \tilde{e}) &= -\dfrac{1}{4\pi}\int_{-\infty}^{\infty}\log\left(1 - \dfrac{|\phi_{\tilde{y}\tilde{e}}(\tilde\omega)|^2}{\phi_{\tilde{y}}(\tilde\omega)\phi_{\tilde{e}}(\tilde\omega)}\right) d\tilde{\omega}\\
	& = -\dfrac{1}{4\pi} \int_{-\infty}^{\infty}\log\left(1 - \dfrac{|\phi_{ye}(\omega)|^2}{\phi_y(\omega)\phi_e(\omega)}\right) \dfrac{d\omega}{\omega^2}.
	\end{split}
	\end{equation}
	The expression of $I_\infty(\tilde{d}; \tilde{e})$ can be readily implied.
\end{remark}

%======================================================
%============Section 5: Application to Vehicle Platoon Systems=============
%======================================================
\section{An Illustrative Example}\label{sec5}

With the lower bound of the complementary sensitivity Bode-like integral given in~\hyperref[thm3]{Theorem 3}, we now investigate the control trade-offs in an aircraft flight-path angle tracking problem. Considering an F-16 aircraft with $\textrm{Mach} = 0.7$ and altitude $h = 10,000 \textrm{\ ft}$, the linearized longitudinal dynamics can be described by the following state-space model~\cite{Shkolnikov_JGCD_2001}.
\begin{equation*}
\begin{split}
A & = \left[\begin{matrix}
-11.707 & 0 & -75.666 \\
0 & 11.141 & -79.908 \\
0.723 & 0.907 & -1.844\\
\end{matrix}\right], \quad
B = \left[\begin{matrix}
0 \\ 0\\ 0.117
\end{matrix}\right], \\
C & = \left[\begin{matrix}
0, \ 0, \ 1
\end{matrix}\right],
\end{split}
\end{equation*}
where the input is elevator deflection $\delta_e(t)$, and the output is flight-path angle $\gamma(t)$. With zero initial condition, the longitudinal dynamics in state-space form can be equivalently described by the following transfer function
\begin{equation*}
G(s) = \dfrac{0.117 \cdot (s + 11.71)(s - 11.14)}{(s + 2.979)(s - 1.051)(s+0.4826)},
\end{equation*}
which contains a non-minimum phase zero at $s = 11.14$ and an unstable pole at $s = 1.051$. Consider the following two PID controllers with different sets of parameters:
\begin{equation*}
\begin{split}
C_1(s) &= -0.4 - 0.06 \cdot \dfrac{1}{s} - 1 \cdot \dfrac{100}{1 + 100 \cdot 1 / s}, \\
C_2(s) &= 2 \cdot C_1(s),\\
\end{split}
\end{equation*}
where $100 / (1 + 100 / s)$ is an approximation of the derivative term in PID controller, and the open-loop transfer functions $L_1(s) = G(s)C_1(s)$ and $L_2(s) = G(s)C_2(s)$.

With the plant transfer function $G(s)$ and control mapping $C_1(s)$, we first verify the lower bounds of Bode-like integrals. By~\hyperref[lem1]{Lemma~1}, we can compute the Bode-like integral in~\eqref{eq29} with the complementary sensitivity function defined by $L_1(s)$, which gives
\begin{equation*}
\dfrac{1}{2\pi}\int_{-\infty}^{\infty} \log T(\omega) \ \dfrac{d\omega}{\omega^2} = 0.915 \geq 8.977 \times 10^{-2} = \sum_{z_i \in \mathcal{UZ}}\dfrac{1}{z_i}.
\end{equation*}
The sensitivity Bode-like integral can also be computed as
\begin{equation*}
\dfrac{1}{2\pi}\int_{-\infty}^{\infty} \log S(\omega) d\omega = 6.925  \geq 1.051 = \sum_{p_i \in \mathcal{UP}} p_i.
\end{equation*}
\begin{remark}
	Although both the sensitivity and complementary sensitivity Bode-like integrals are bounded in this example, for arbitrary causal transfer functions $L(s)$ that are closed-loop stable, these two Bode-like integrals are not guaranteed to be bounded. A comprehensive discussion on the boundedness of sensitivity Bode integral subject to the different conditions of the open-loop transfer functions $L(s)$ is available in~\cite{Wu_TAC_1992}.
\end{remark}

With the linearized longitudinal dynamics $G(s)$ and controller mappings $C_1(s)$ and $C_2(s)$, by~\hyperref[thm1]{Theorem~1} and~\hyperref[thm2]{Corollary~2}, the magnitudes of complementary sensitivity-like functions and sensitivity-like functions are given in~\hyperref[fig4]{Figure~4}, in which the solid lines denote the data with $C_1(s)$ and the dashed lines represent the data with $C_2(s)$. Subject to disturbance $d(t)$, the complementary sensitivity-like and sensitivity-like functions shown in~\hyperref[fig4]{Figure~4} tell that control mapping $C_1(s)$ performs better in disturbance mitigation in higher frequencies ($\omega > 5$ $\textrm{rad}\cdot \textrm{s}^{-1}$), while control mapping $C_2(s)$ performs better when attenuating the disturbance of lower frequencies ($\omega < 5$ $\textrm{rad}\cdot \textrm{s}^{-1}$), which can be explained by inequalities~\eqref{eq10} and~\eqref{eq29}, since the area below the solid line should equal to the area below the dashed line when the control mappings do not contain any unstable pole and non-minimum phase zero. This phenomenon is also known as the water-bed effect~\cite{Doyle_1992}. 

\begin{figure*}[t]
	\centering
	\includegraphics[width=\textwidth]{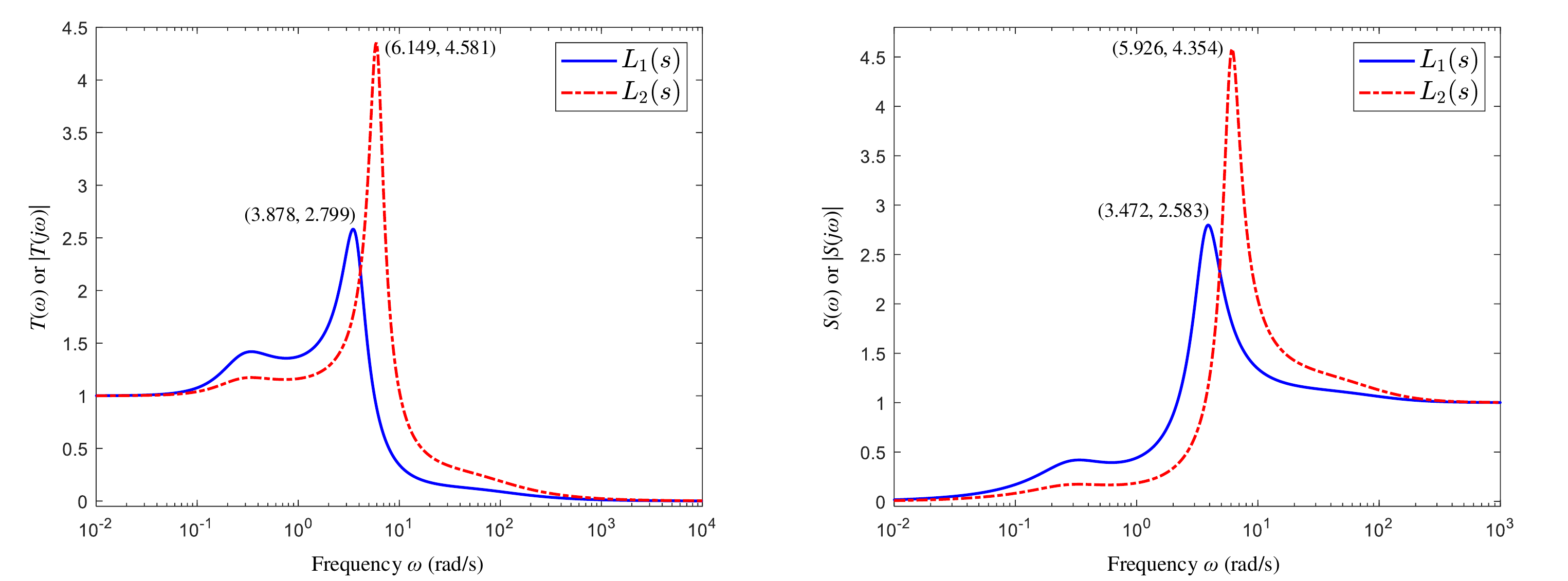} \vspace{-1.5em}
	\caption{Complementary sensitivity-like functions (left) and sensitivity-like integrals (right).}\label{fig4}
\end{figure*}

\section{Conclusions}\label{sec6}
We discussed the relationship between Bode integrals of (complementary) sensitivity functions and (complementary) sensitivity-like functions. A lower bound for the continuous-time complementary sensitivity Bode-like integral was derived based on the power spectral densities of signals. The lower bound was later examined with the linearized flight-path angle tracking control problem of an F-16 aircraft. Future discussions may include relaxing distribution condition on the disturbance signal and generalizing these results to  nonlinear systems.

\section*{Acknowledgment}
This work was partially supported by AFOSR and NSF. The authors specially acknowledge the staff and readers on arXiv.org.

\section*{Appendix}\label{app}

\subsection{Proof of Theorem 1}\label{appA}
Since $d(t) = e(t) + y(t)$, the density function $\phi_d(\omega)$ in~\eqref{s_like} satisfies
%\begin{equation}
\begin{align}\label{eq1_2}
\phi_d(\omega)  &=  \int_{-\infty}^{\infty} r_d(\tau) \cdot {\rm e}^{-j\omega\tau} d\tau \nonumber\\
&\hspace{-5pt} = \int_{-\infty}^{\infty} \left[r_e(\tau) + r_{ey}(\tau) + r_{ye}(-\tau) + r_y(\tau)\right] {\rm e}^{-j\omega\tau} d\tau\nonumber\\
&\hspace{-5pt} = \phi_e(\omega) + \phi_{ey}(\omega) + \phi_{ye}(\omega) + \phi_y(\omega)
\end{align}
%\end{equation}
Letting $\tau = v - t$, and noticing that $y(t) = \int_{0}^{\infty} l(v')e(t - v')dv'$,	 subject to \hyperref[ass1]{Assumption 1},  the covariances $r_e, r_{ey}, r_{ye}$ and $r_y$, in~\eqref{eq1_2} satisfy
%	Then we need to rewrite the spectral density $\phi_d(\omega)$ in terms of $\phi_e(\omega)$. Since the error signal $e(t)$ is a zero-mean weakly stationary process, denoting, the power spectral densities on the RHS of~\eqref{eq1_2} satisfy
\begin{subequations}
	\begin{align}
	r_e(v,t) &= \textrm{Cov}[e(t+v-t), e(t)] \nonumber \\
	&\hspace{5em} =\textrm{Cov}[e(t+\tau), e(t)]= r_e(\tau)\\
	%			\end{align}
	%		\end{equation}
	%		\begin{equation}
	%			\begin{split}	
	r_{ey}(v,t) & = \textrm{Cov}[e(v), \int_{0}^{\infty}l(v')e(t - v') dv']\nonumber\\
	%	& =\int_{0}^{\infty}l(v') \cdot \textrm{Cov}[e(v), e(t - v')] dv'\nonumber\\
	%	&= \int_{0}^{\infty}l(v') r_e(v' + v - t) dv' \\
	& = \int_{0}^{\infty}l(v') r_e(v' + \tau ) dv'\nonumber\\
	& = r_{ey}(\tau)\nonumber\\
	%			\end{split}
	%		\end{equation}
	%		\begin{equation}
	%			\begin{split}
	r_{ye}(v,t) &= \textrm{Cov}[\int_{0}^{\infty}l(v')e(v-v')dv', e(t)]\nonumber\\
	%				& = \int_{0}^{\infty}l(s') \cdot \textrm{cov}[e(s-s'), e(t)] ds'\\
	%				& = \int_{0}^{\infty}l(v') r_e(- v' + v- t)dv' \\
	& = \int_{0}^{\infty} l(v')r_e(-v' + \tau) dv' \\
	& = r_{ye}(\tau)\nonumber\\
	%			\end{split}
	%		\end{equation}
	%		\begin{align}
	%			\begin{split}
	&\hspace{-4.25em}r_y(v,t)  = \textrm{Cov}[\int_{0}^{\infty}l(v')e(v-v')dv', \int_{0}^{\infty}l(t')e(t-t')dt']\nonumber\\
	%	& \hspace{-1.25em} = \int_{0}^{\infty}\int_{0}^{\infty}l(v')l(t') \cdot \textrm{cov}[e(v-v'), e(t-t')] dv' dt'\nonumber\\
	& \hspace{-1.25em}= \int_{0}^{\infty}\int_{0}^{\infty}l(v')l(t') \cdot r_e(\tau - v' + t') dv' dt'\nonumber\\
	&\hspace{-1.25em} = r_y(\tau)
	\end{align}
\end{subequations}
\noindent Hence the spectral density functions $\phi_{ey}, \phi_{ye}$, and $\phi_y$, in~\eqref{eq1_2} satisfy
%	\begin{subequations}
\begin{subequations}\label{eq20}
	\begin{align}
	\phi_{ey}(\omega) & = \frac{1}{2\pi} \int_{-\infty}^{\infty} r_{ey}(\tau) \cdot {\rm e}^{-j\omega\tau}d\tau \nonumber\\
	%	& = \frac{1}{2\pi} \int_{-\infty}^{\infty} e^{-i\omega\tau} \int_{0}^{\infty}h(s') r_e(\tau + s') ds' d\tau\\
	%	& = \frac{1}{2\pi} \int_{-\infty}^{\infty} e^{-i\omega(\tau+s')} \int_{0}^{\infty} e^{i\omega s'}h(s') r_e(\tau + s') ds' d\tau\\
	& \hspace{-1.25em} = \frac{1}{2\pi} \int_{0}^{\infty}  {\rm e}^{j\omega v'}l(v') \int_{-\infty}^{\infty} {\rm e}^{-j\omega(\tau+v')}  r_e(\tau + v') d\tau dv'\nonumber\\
	& \hspace{-1.25em} = L(-j\omega) \phi_e(\omega)\\
	%		\end{align}
	%		\end{equation}
	%		\begin{equation}\label{csdf_2}
	%		\begin{split}
	\phi_{ye}(\omega) &= \frac{1}{2\pi} \int_{-\infty}^{\infty} r_{ye}(\tau) \cdot {\rm e}^{-j\omega\tau}d\tau\nonumber\\
	%	& = \frac{1}{2\pi} \int_{-\infty}^{\infty} e^{-i\omega(\tau-s')} \int_{0}^{\infty}e^{-i\omega s'}h(s') r_e(\tau - s') ds' d\tau\\
	& \hspace{-1.25em} = \frac{1}{2\pi} \int_{0}^{\infty}  {\rm e}^{-j\omega v'}l(v')  \int_{-\infty}^{\infty} {\rm e}^{-j\omega(\tau-v')} r_e(\tau - v') d\tau dv'\nonumber\\
	& \hspace{-1.25em} = L(j\omega)\phi_e(\omega)\\
	%		\end{split}
	%		\end{equation}
	%		\begin{subequations}\label{csdf_3}
	%		\begin{align}
	\phi_y(\omega) & = \dfrac{1}{2\pi} \int_{-\infty}^{\infty} r_y(\tau) \cdot {\rm e}^{-j\omega\tau} d\tau\nonumber\\
	%	& = \dfrac{1}{2\pi} \int_{-\infty}^{\infty} \int_{0}^{\infty}\int_{0}^{\infty}l(s')l(t') \cdot r_e(\tau - s' + t') \cdot  \nonumber\\
	%	&\hspace{14.5em}ds' dt' e^{-j\omega\tau}d\tau\nonumber\\
	& = \int_{0}^{\infty} {\rm e}^{j\omega t'}l(t') \int_{0}^{\infty} {\rm e}^{-j\omega s'}l(s') \cdot \\
	& \hspace{3em}\int_{-\infty}^{\infty} {\rm e}^{-j\omega(\tau-s'+t')}r_e(\tau - s' + t') d\tau ds' dt'\nonumber\\
	& = L(-j\omega)L(j\omega)\phi_e(\omega)\nonumber
	\end{align}
\end{subequations}
%	\end{subequations}
Substituting~\eqref{eq1_2} and~\eqref{eq20} into the sensitivity-like function $S(\omega)$ defined in~\eqref{s_like}, we can rewrite the sensitivity-like function as follows
\begin{equation}
%\begin{split}
{S}(\omega) = \sqrt{\dfrac{\phi_e(\omega)}{[1+L(-j\omega)] \cdot [1+L(j\omega)] \cdot \phi_e(\omega)}}.
%& = \sqrt{\dfrac{\phi_e(\omega)}{\phi_e(\omega) + \phi_{ey}(\omega) + \phi_{ye}(\omega) + \phi_y(\omega)}},\\
%& = \sqrt{\dfrac{\phi_e(\omega)}{[1+L(-j\omega)] \cdot [1+L(j\omega)] \cdot \phi_e(\omega)}}.
%\end{split}
\end{equation}
When $\phi_e(\omega) \not\equiv 0$, we have	
\begin{equation}
{S}(\omega)  = \sqrt{S(-j\omega) \cdot S(j\omega)}.
\end{equation}
%	Therefore, we know that $\tilde{S}_{d,e}(\omega)$ is an even function.
Since $S(-j\omega) = \bar{S}(j\omega)$, where $\bar{S}(j\omega)$ is the complex conjugate of ${S}(j\omega)$, the equality~\eqref{eq1_0} in~\hyperref[thm1]{Theorem 1} can be retrieved from
%Thus in the following analysis, we only consider the case when $\omega \in [0, \infty)$	

%	\begin{equation}
\begin{align}
\dfrac{1}{2\pi}\int_{-\infty}^{\infty} \log {S}(\omega) \ d\omega & = \dfrac{1}{4\pi} \int_{-\infty}^{\infty} \log \left[S(-j\omega) \cdot S(j\omega)\right] d \omega\nonumber\\
%& = \dfrac{1}{4\pi} \int_{-\infty}^{\infty} \log \left[\bar{S}(j\omega) \cdot {S}(j\omega)\right] d \omega\nonumber\\
%& = \dfrac{1}{4\pi} \int_{-\infty}^{\infty} \log \left|S(j\omega)\right|^2 d \omega\\
& =  \dfrac{1}{2\pi} \int_{-\infty}^{\infty} \log | S(j\omega) | d\omega .
%	& = \\
%	& ={\color{red}  \dfrac{1}{4\pi} \int_{-\infty}^{\infty} \log \left|\dfrac{1}{1+G(-i\omega)}\right| d \omega +  \dfrac{1}{4\pi} \int_{-\infty}^{\infty} \log \left|\dfrac{1}{1+G(i\omega)}\right| d \omega}\\
%	& = \dfrac{1}{4\pi} \int_{-\infty}^{\infty} \log | S(-i\omega) | d\omega + \dfrac{1}{4\pi} \int_{-\infty}^{\infty} \log | S(i\omega) | d\omega\\
%	& = {\color{red} 2 \cdot \dfrac{1}{4\pi} \int_{-\infty}^{\infty} \log | S(i\omega) | d\omega }\\
%	& = \dfrac{1}{2\pi} \int_{-\infty}^{\infty} \log | S(i\omega) | d\omega
\end{align}
%	\end{equation}
%	with
%	\begin{equation}
%	\bar{G}(i\omega) = G(-i\omega)
%	\end{equation}
This completes the proof. \hfill \qed

\subsection{Proof of Corollary 2}\label{appB}

Substituting~\eqref{eq1_2} and~\eqref{eq20} into the complementary sensitivity-like function $T(\omega)$ defined in~\eqref{eq9}, we can then rewrite $T(\omega)$ as follows
\begin{equation}
{T}(\omega)  = \sqrt{\dfrac{L(-j\omega)L(j\omega) \cdot \phi_e}{[1+L(-j\omega)]\cdot[1 + L(j\omega)] \cdot \phi_e}}
\end{equation}
When $\phi_e(\omega) \not\equiv 0$, it follows that
\begin{equation}
{T}(\omega) = \sqrt{T(-j\omega) \cdot T(j\omega)}
\end{equation}
Since $T(-j\omega) = \bar{T}(j\omega)$, where $\bar{T}(j\omega)$ is the complex conjugate of ${T}(j\omega)$, the equality~\eqref{eq24} in~\hyperref[thm2]{Corollary 2} can be retrieved from
\begin{align}
\frac{1}{2\pi} \int_{-\infty}^{\infty} \log {T}(\omega) \frac{d\omega}{\omega^2} & = \frac{1}{4\pi} \int_{-\infty}^{\infty} \log [T(-j\omega) \cdot T(j\omega)] \dfrac{d\omega}{\omega^2}\nonumber \\
%& = \frac{1}{4\pi} \int_{-\infty}^{\infty} \log [\bar{T}(j\omega) \cdot T(j\omega)] \dfrac{d\omega}{\omega^2}\nonumber \\
%& = \frac{1}{4\pi} \int_{-\infty}^{\infty} \log |T(j\omega)|^2 \dfrac{d\omega}{\omega^2}\\
& =  \frac{1}{2\pi} \int_{-\infty}^{\infty} \log |T(j\omega)| \dfrac{d\omega}{\omega^2}
\end{align}

\noindent This completes the proof.\hfill \qed

\end{document}